\documentclass[11pt,english,a4paper]{article}

\usepackage[T1]{fontenc}
\usepackage[margin=1in]{geometry}
\usepackage[utf8]{inputenc}
\usepackage{color}
\usepackage[dvipsnames]{xcolor}
\usepackage{enumerate}
\usepackage{enumitem}
\usepackage{hyperref}
\definecolor{darkblue}{rgb}{0, 0, 0.5}
\hypersetup{
     colorlinks = true,
     citecolor = darkblue,
     urlcolor = darkblue,
     linkcolor = darkblue
}
\usepackage{amsmath}
\usepackage{amsthm}
\usepackage{multirow}
\usepackage{amssymb}
\usepackage{cases}
\usepackage{bm}
\usepackage[authoryear]{natbib}
\setlength{\bibsep}{0ex}
\usepackage{setspace}
\usepackage{graphicx}
\usepackage{float}
\usepackage{cleveref}
\usepackage{booktabs}
\usepackage{soul}
\usepackage{siunitx}

\onehalfspacing
\frenchspacing

\newcommand{\R}{\mathbb{R}}
\newcommand{\E}{\operatorname{\mathbb{E}}}
\newcommand{\Pb}{\operatorname{\mathbb{P}}}
\newcommand{\supp}{\operatorname{supp}}

\newcommand{\M}{\mathbb{M}}
\newcommand{\bydef}{:=}


\makeatletter

\newtheorem{assumption}{\protect\assumptionname}
\newtheorem{thm}{\protect\theoremname}
\newtheorem*{thm*}{\protect\theoremname}

\newtheorem{lem}{\protect\lemmaname}
\theoremstyle{remark}

\providecommand{\assumptionname}{Assumption}
\providecommand{\claimname}{Claim}
\providecommand{\lemmaname}{Lemma}
\providecommand{\remarkname}{Remark}
\providecommand{\theoremname}{Theorem}
\providecommand{\examplename}{Example}
\providecommand{\corname}{Corollary}

\crefname{thm}{Theorem}{Theorems}
\crefname{assumption}{Assumption}{Assumptions}
\crefname{subassumptioni}{Assumption}{Assumptions}

\newlist{subassumption}{enumerate}{1}
\setlist[subassumption,1]{
  label=(\roman*),   
  ref=\theassumption(\roman*), 
  leftmargin=2em
}

\title{\textbf{Raking for Estimation and Inference in Panel Models\\
with Nonignorable Attrition and Refreshment}\footnote{Research reported in this publication was supported by the National Institute on Aging of the National Institutes of Health and, in part, by the Social Security Administration under Award Number U01AG077280. The content is solely the responsibility of the authors and does not necessarily represent the official views of the National Institutes of Health.}
}   

\author{
\setcounter{footnote}{1}
    \textsc{Grigory Franguridi}\thanks{
    Center for Economic and Social Research, University of Southern California.
    Email: \href{franguri@usc.edu}{franguri@usc.edu}
    }
    \hspace{0.8cm}
    \textsc{Jinyong Hahn}\thanks{
    Department of Economics, University of California, Los Angeles. Email: \href{hahn@econ.ucla.edu}{hahn@econ.ucla.edu}
	}
	\hspace{0.8cm}
    \textsc{Pierre Hoonhout}\thanks{
    ISEG Research, ISEG Lisbon School of Economics \& Management, Universidade de Lisboa, Lisbon, Portugal.
    Email: \href{pierre@iseg.ulisboa.pt}{pierre@iseg.ulisboa.pt}
	}
    \\
    \textsc{Arie Kapteyn}\thanks{
    Center for Economic and Social Research, University of Southern California.
		Email: \href{kapteyn@usc.edu}{kapteyn@usc.edu}
	}
    \hspace{0.8cm}
	\textsc{Geert Ridder}\thanks{
	Department of Economics, University of Southern California.
		Email: \href{ridder@usc.edu}{ridder@usc.edu}
	}
}

\begin{document}

\maketitle


\begin{abstract}

    \linespread{1.2}

    In panel data subject to nonignorable attrition, auxiliary (refreshment) sampling may restore full identification under weak assumptions on the attrition process.
    Despite their generality, these identification strategies have seen limited empirical use, largely because the implied estimation procedure requires solving a functional minimization problem for the target density. We show that this problem can be solved using the iterative proportional fitting (raking) algorithm, which converges rapidly even with continuous and moderately high-dimensional data.
    This resulting density estimator is then used as input into a parametric moment condition.
    We establish consistency and convergence rates for both the raking-based density estimator and the resulting moment estimator when the distributions of the observed data are parametric.
    We also derive a simple recursive procedure for estimating the asymptotic variance.
    Finally, we demonstrate the satisfactory performance of our estimator in simulations and provide an empirical illustration using data from the Understanding America Study panel.

\medskip

\noindent \textbf{JEL Classification:} C23

\medskip

\noindent \textbf{Keywords:} panel data, refreshment sample, additively nonignorable attrition, Kullback-Leibler divergence, raking, iterative proportional fitting, semiparametric model

\end{abstract}
\newpage
\section{Introduction}

Longitudinal data have many advantages over (repeated) cross-sections. However, a major drawback is that panel data suffer from panel attrition, in addition to the initial nonresponse common to panel and cross-sectional surveys. Panel attrition reduces the effective sample size. More concerning is that it may bias the estimates obtained from the panel survey. For instance, if we estimate the average change in household income, then households that experienced a decline in income due to the unemployment of the head of the household are more likely to move and be lost to follow-up. This nonrandom attrition that depends on an outcome variable results in an upward bias in the estimate of the average change. 

Estimates that depend on outcomes in all panel waves, including the wave in which the subject drops out, are biased. Unbiasedness can be restored if the probability of attrition is restricted to depend on the outcome in the wave that the subject drops out, but not on the outcomes in previous waves \citep{hausman1979attrition}. Estimates can also be unbiased if attrition depends on the outcomes in the waves prior to the wave that the subject drops out, but not on the (unobserved) outcome in the wave that the subject leaves the panel \citep{rubin1976inference,little2019statistical}. The latter case is referred to as Missing At Random (MAR), with MCAR (Missing Completely At Random) being the special case in which the attrition is purely random.

The unselected joint distribution of the outcome variables is, under the assumption of MAR, nonparametrically just identified. Therefore, additional information is needed to relax MAR if we want to allow the attrition to depend on the unobserved outcome in the wave in which the subject drops out. A source of additional information is a supplementary sample drawn to compensate for the loss of subjects due to drop-out. The idea to add such a sample dates back at least to \cite{kish1959replacement}. In panel surveys, these samples are drawn in the second and later waves and are called refreshment samples \citep{ridder1992empirical}. \cite{deng2013handling} and \cite{watson2021refreshment} survey the use of refreshment samples in panel surveys.

A key identification result in panel surveys with selective attrition and refreshment samples was established by \cite{hirano2001combining}. They considered a selective panel with two waves supplemented by a refreshment sample in the second wave. In this setup, the marginal distributions of the first-wave variables and of the second-wave variables are directly identified. They proposed estimating the unselected joint distribution of the variables by the distribution that minimizes the distance to the joint selective observed distribution of the variables in the two waves under the constraints on the marginal distributions. 

Because the distance measure is a convex functional and the restrictions are linear, this functional minimization problem has a unique solution. This solution can be used to obtain a semiparametric estimation procedure. Consider an estimator that depends on the outcomes in both waves of the panel. An example is a Fixed Effects (FE) regression where we regress the change in the dependent variable on the change in the independent variables. If there is no selective attrition that depends on the first- and second-wave dependent variables, then the FE estimator of the regression coefficients is unbiased. The estimator is the solution to a sample moment condition. If the attrition does depend on the first and second-wave dependent variables, then the FE estimator is biased. 

The first-order condition of the functional minimization problem that defines the estimator of the unselected joint distribution has a unique solution for the probability of observation that is a function of the outcome variables in both waves of the panel. This probability is additively separable in the outcomes for the two waves, which is intuitive because we only observe the selected joint distribution. Under this restriction, the probability is nonparametrically identified. 

The probability of observation is used in a weighted generalized moment estimator. \cite{bhattacharya2008inference} studies the properties of this semiparametric estimator, and \cite{hoonhout2019nonignorable} develops the estimator for panel data with three or more waves. \cite{hirano1998combining} and \cite{deng2013handling} choose parametric models for the probability of observation and the binary outcome variables and use Markov Chain Monte Carlo to compute posteriors for the parameters. \cite{deng2013handling} use multiple imputation inference \citep{rubin2004multiple}. A disadvantage of the weighted moment and Bayesian estimators is that they are computationally demanding and require close monitoring by the statistician. \cite{bhattacharya2008inference} reports that the moment estimator does not always converge.\footnote{See \cite{Franguridi2024} for further comments on \cite{bhattacharya2008inference}.} Preferably, we would like a procedure that does not require close monitoring.

In this paper, we follow the generalized moment approach, but instead of estimating weights, we average over the unselected joint distribution that is estimated by minimizing the distance to the observed joint distribution under the constraints on the marginal unselected distributions, a problem that has a unique solution. We exploit the availability of an efficient algorithm for this constrained minimization problem if the distance is the Kullback-Leibler divergence. This algorithm is iterative and involves only arithmetic operations. With discrete data, the algorithm is known as \emph{raking} or \emph{iterative proportional fitting} and was first proposed by \cite{deming1940least}. 

Starting from the observed joint distribution of the outcomes, often only a few steps are required for convergence to the unique solution.\footnote{\cite{ruschendorf1995convergence} gave conditions under which the algorithm converges.}
Inputs in the algorithm are the observed joint distribution of the variables in the two waves and the marginal distributions in the first and second waves, with the latter obtained from the refreshment sample. The output is an estimate of the unselected joint distribution of the variables in the two waves. We can estimate the input densities either parametrically or nonparametrically.  In this paper, we take the parametric approach. As a by-product, the raking algorithm yields a recursive approximation of the Jacobian and the asymptotic variance of the generalized moment estimator.

The paper is organized as follows.
\Cref{sec:framework} introduces the framework.
\Cref{sec:estimation} considers estimation.
\Cref{sec:inference} discusses inference.
\Cref{sec:numerical} provides details for numerical implementation.
\Cref{sec:mc} reports a simulation study of our estimator.
\Cref{sec:empirical} contains the empirical illustration.
\Cref{sec:conclusion} concludes.
\section{Framework}\label{sec:framework}

We consider a two-period panel data model with attrition and refreshment as in \citet{hirano2001combining}.
Let $Z_{it}$ be a $d$-dimensional vector of all the time-varying variables (both outcomes and covariates) for unit $i$ in period $t \in \{1,2\}$, and let $X_i$ be a vector of all the time-invariant variables for unit $i$.
We often drop the subscript $i$ for convenience.
Our analysis can be made conditional on $X_i$,  and hence, throughout the rest of the paper, we drop the variable $X_i$.
In the first period, all the $n$ sampled units are observed: $Z_{i1},$ $i=1,\dots,n$, is a random sample from the marginal distribution of $Z_1$ with density $f_1$ .
In the second period, there is \emph{attrition}: only the units with $W_i=1$ stay in the sample, while the remaining units drop out.
To compensate for this attrition, a \emph{refreshment sample} $Z_{i2}^r$, $i =1,\dots, n_r$, i.e., a random sample from the marginal distribution of $Z_2$ with density $f_2$, is drawn in period 2.
We assume that the refreshment sample is independent of the rest of the data, and its size is of the same order of magnitude as the size of the first-period sample, i.e. $n/n_r \to c\in(0,\infty)$.

Our goal is to estimate and perform inference on the $d_\theta$-dimensional parameter $\theta$ satisfying the moment condition
\begin{align}
\E_f\left[  \varphi\left(  Z_{1},Z_{2},\theta\right)  \right] = \iint \varphi\left(  z_{1},z_{2},\theta\right) f(z_1,z_2) \, d\mu(z_1,z_2) = 0, \label{eq:moment-theta-unselect}
\end{align}
where $\varphi$ is a $d_{\theta}$-dimensional moment function, and the expectation is taken with respect to the \emph{unconditional} joint density $f$ of $(Z_1,Z_2)$ relative to some dominating probability measure $\mu$ on $\R^{2d}$. The latter allows us to handle both continuous and discrete variables simultaneously.\footnote{We consider the case where $\theta$ is just identified. If the parameter is overidentified, then (\ref{eq:moment-theta-unselect}) is the first-order condition of a quadratic minimization problem.} 

Without further restrictions, $f$ is not point identified, and hence neither is $\theta$.
Moreover, the bounds on the identified set are expected to be uninformative: any distribution $f$ with marginals $f_1$ and $f_2$ is consistent with the data.
To understand the intuition, consider the identity for the target distribution with the balanced panel distribution $f^w(z_1,z_2)\bydef f(z_1,z_2|W=1)$ and the probability of observation
\begin{align}
f(z_1,z_2) = \frac{\Pb(W=1)}{\Pb(W=1|Z_1=z_1,Z_2=z_2)} \cdot f^w(z_1,z_2). \label{eq:key-identity}
\end{align}
Consider the case of discrete data for simplicity.
The only information available to identify the weights $\Pb(W=1)/\Pb(W=1|Z_1=z_1,Z_2=z_2)$, which generally have $|\supp Z_1|\cdot|\supp Z_2|$ degrees of freedom, is the marginal distributions $f_1$ and $f_2$, which only have $|\supp Z_1|+|\supp Z_2|$ degrees of freedom.
This establishes the lack of identification and suggests imposing restrictions on the weights.

\citet{hirano2001combining} impose the following separability condition on the weights.\footnote{\citet{franguridi2025inference} drop this separability assumption, and \citet{franguridi2025set} further consider the case where refreshment samples are not available. In both cases, the parameters of interest become partially identified.}
\begin{assumption}[additive nonignorability]\label{as:AN}
For a known, increasing function $G$ and unknown functions $k_1,k_2$,
\begin{align*}
    \Pb(W=1|Z_1=z_1,Z_2=z_2) = G\left( k_1(z_1)+k_2(z_2) \right).
\end{align*}    
\end{assumption}

Throughout this paper, we further require the following.
\begin{assumption}[exponential link function]\label{as:AN-exp}
$G(x)=\exp(x)$, $x\in \R$.
\end{assumption} 
This link function corresponds to adopting the Kullback-Leibler divergence as the distance mentioned above. This assumption ensures that the functional minimization problem introduced below can be solved by raking.
Since the assumption allows for cross-period interactions in attrition, it does not constitute a significant loss of generality. 

Under \cref{as:AN,as:AN-exp}, the key identity \eqref{eq:key-identity} becomes
\begin{align}
    f(z_1,z_2) = e^{k_1(z_1)+k_2(z_2)} \cdot f^w(z_1,z_2), \label{eq:key-identity-exp}
\end{align}
where we redefine $k_1,k_2$ appropriately to absorb the constant $\Pb(W=1)$ and switch the sign on $k_1,k_2$. 
The marginal densities $f_1,f_2$ can now be used to identify $k_1,k_2$,  but it is convenient to estimate $f$ using the characterization in \citet{hirano2001combining}.

\section{Estimation}\label{sec:estimation}

\citet{hirano2001combining} show that Assumptions \ref{as:AN} and \ref{as:AN-exp} imply that $f$ is the Kullback-Leibler projection of $f^w$ on the set $\Pi(f_1,f_2)$ of distributions with marginals $f_1$ and $f_2$, i.e.,\footnote{This projection is also known as the \emph{Schr\"{o}dinger bridge} with marginals $f_1$, $f_2$ and the reference distribution $f^w$. It is named after the physicist Erwin Schr\"{o}dinger, who first considered it in \citet{schrodinger1931umkehrung}.}
\begin{align}
    f = \arg\min_{\tilde f \in \Pi(f_1,f_2)} \operatorname{KL}(\tilde f, f^w) \label{eq:KL-representation},
\end{align}
where
\[
\operatorname{KL}(\tilde f, f^w) \bydef \iint \tilde f(z_1,z_2) \log \left( \frac{\tilde f(z_1,z_2)}{f^w(z_1,z_2)} \right) \, d\mu(z_1,z_2).
\]
Equivalently,  $f$ minimizes the Kullback-Leibler distance to $f^w$ for densities $f$ with given marginals $f_1,f_2$. Because the distance $\operatorname{KL}(\tilde f, f^w)$ viewed as a function of $\tilde f$ is strictly convex in $\tilde f$, the minimum is unique. The density $f$ is a functional of the densities $f_1,f_2$, and $f^w$, which are directly estimable from the data.

The sample analog estimator of $f$ is
\begin{align}
    \hat f = \arg\min_{\tilde f \in \Pi(\hat f_1,\hat f_2)} \operatorname{KL}(\tilde f, \hat f^w) \label{eq:KL-representation-1},
\end{align}
where $\hat f_1$, $\hat f_2$, and $\hat f^w$ are density estimators.
The estimator $\hat f$ is then a functional of $\hat f_1, \hat f_2$, and $\hat f^w$.

At first sight, solving the constrained functional optimization problem \eqref{eq:KL-representation} that defines the estimator seems computationally infeasible. However, there exists an algorithm for solving this problem that is guaranteed to converge and requires little beyond basic arithmetic operations.
To describe the algorithm, let $\Pi_1$ and $\Pi_2$ be the operators calculating the first and second marginals, respectively, i.e. $\Pi_1 f(z_1) \bydef \int f(z_1,z_2) \, d \mu_2(z_2) $ and $\Pi_2 f(z_2) \bydef \int f(z_1,z_2) \, d \mu_1(z_1)$.
Given the marginal densities $f_1,f_2$, define the operators $\phi_{1,f_1}$, $\phi_{2,f_2}$, and $\phi_{f_1,f_2}$ by
\begin{align*}
\phi_{1,f_1}(f) \bydef \frac{f_1 \cdot f}{\Pi_1 f},
\quad
\phi_{2,f_2}(f) \bydef \frac{f_2 \cdot f}{\Pi_2 f},
\quad
\phi_{f_1,f_2} \bydef \phi_{2,f_2}(\phi_{1,f_1}(f)).
\end{align*}
where $f$ is a general density, not necessarily with marginal densities $f_1,f_2$. Note that $\phi_{1,f_1}(f)$ is the Kullback-Leibler (KL) projection of $f$ on the set of distributions with the first marginal $f_1$, and similarly for $\phi_{2,f_2}(f)$ \citep[p. 1164]{ruschendorf1995convergence}.

To see this, let us show that the KL projection of $f$ onto the set of distributions with one fixed marginal $f_1$ has to preserve conditional distributions $f(z_2|z_1)$.
Indeed, write an arbitrary distribution $g$ as $g(z_1,z_2)=g(z_1)g(z_2|z_1)$.
Since
\begin{align*}
    \frac{g(z_1,z_2)}{f(z_1,z_2)} = \frac{g(z_1) g(z_2|z_1)}{f(z_1)f(z_2|z_1)},
\end{align*}
we have 
\begin{align*}
    \operatorname{KL}(g\,||\,f) &= \int g(z_1,z_2)\log \frac{g(z_1,z_2)}{f(z_1,z_2)}\,dz_1 dz_2 \\
    &= \iint g(z_1,z_2) \log \frac{g(z_1)}{f(z_1)}\, dz_1 dz_2 + \iint g(z_1,z_2) \log \frac{g(z_2|z_1)}{f(z_2|z_1)}\, dz_1 dz_2 \\
    &= \int g(z_1)\log \frac{g(z_1)}{f(z_1)}\, dz_1 + \int \left[\int g(z_2|z_1) \log \frac{g(z_2|z_1)}{f(z_2|z_1)} \, dz_2 \right] g(z_1) dz_1 \\
    &= \operatorname{KL}(g_1\,||\, f_1) + \int \operatorname{KL}(g(\cdot|z_1) \,||\, f(\cdot|z_1)) g(z_1) \, dz_1.
\end{align*}
Since we are interested in the class of distributions $g$ with fixed first marginal, the first term is constant over this class, and hence is irrelevant.
On the other hand, the second term is nonnegative and is minimized at $g(\cdot|z_1)=f(\cdot|z_1)$ for all $z_1$ in the support of $Z_1$.
Hence, the KL projection is a distribution $g$ with the first marginal $g_1=f_1$ and the conditional distributions equal to those of $f$. There is only one such distribution, and it is exactly $\phi_{1,f_1}(f)$ defined above.
Further discussion can be found in \citet[paragraph 2, p. 155]{csiszar1975divergence}, \citet{ireland1968contingency}, and \citet{kullback1968probability}.

The algorithm requires assumptions on the supports of the involved distributions.
\begin{assumption}\label{as:abs-cont}
\begin{subassumption}
    \item $\mu = \bigotimes_{k=1}^{2d} \mu_k$, where $\mu_k$ is either the Lebesgue measure on $\R$ or the counting measure on a countable subset of $\R$, and the joint distribution of $Z_1,Z_2$ is absolutely continuous w.r.t. $\mu$.
    \item The support of the joint distribution of $Z_1,Z_2$ in the balanced panel is contained in the product set of the supports of $Z_1$ and $Z_2$ in the balanced panel.
    \item The support of the unselected distribution of $Z_1$ is contained in the support of $Z_1$ in the balanced panel and similarly for $Z_2$.
    \item There exists $c>0$ such that $\frac{f^w(z_1,z_2) f_2^w(z_2)}{f_2(z_2)} \ge c$ for $f^w$-a.e. $(z_1,z_2)$.
\end{subassumption}
\end{assumption}
\cref{as:abs-cont}(i) states that the variables in the data are either discrete or continuous.
\cref{as:abs-cont}(ii) prevents $f^w$ from concentrating on lower-dimensional subsets of $\R^{2d}$; as long as the latter does not happen, the condition is satisfied.
\cref{as:abs-cont}(iii) posits that the support of the period marginals of the balanced panel distribution cannot be smaller than the support of the corresponding marginals of the target distribution.
Finally, \cref{as:abs-cont}(iv) is satisfied if $f^w$ is bounded away from zero on its support.

Theorem 3.5 in \citet{ruschendorf1995convergence} implies the following result.

\begin{thm}\label{thm:raking}
    Under \cref{as:AN,as:AN-exp,as:abs-cont}, the sequence of functions 
    \begin{align*}
    \hat f^{(t)} = \hat \phi_{f_1,f_2}\left (\hat f^{(t-1)}\right ),  \quad \hat f^{(0)}= \hat f^w,    
    \end{align*}
    converges to the solution $\hat f$ of \eqref{eq:KL-representation-1} in $L^1$ as $t \to \infty$.
\end{thm}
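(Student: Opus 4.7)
The theorem will follow by a direct application of Theorem~3.5 of \citet{ruschendorf1995convergence}, which gives $L^1$-convergence of the IPFP iterates to the KL projection of the reference density onto the set of densities with prescribed marginals. The work reduces to checking, pathwise in the data, that \cref{as:AN,as:AN-exp,as:abs-cont} supply the hypotheses of that theorem with $\hat f^w,\hat f_1,\hat f_2$ in the role of fixed inputs.

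First I would verify that the iterates are well-defined. \Cref{as:abs-cont}(i) fixes $\mu=\bigotimes_k \mu_k$ as a product dominating measure, so the marginal operators $\Pi_1,\Pi_2$ and the ratios appearing in $\hat\phi_{1,\hat f_1}$ and $\hat\phi_{2,\hat f_2}$ are unambiguously defined at the level of densities. \Cref{as:abs-cont}(ii)--(iii) then guarantee that the support of each iterate $\hat f^{(t)}$ sits inside the product of the supports of $\hat f_1$ and $\hat f_2$, so $\Pi_1 \hat f^{(t)}$ and $\Pi_2 \hat f^{(t)}$ are strictly positive wherever $\hat f_1,\hat f_2$ are, no division by zero arises, and $\hat f^{(t)}$ remains a probability density with respect to $\mu$ at every step.

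Next I would verify existence and uniqueness of the minimizer $\hat f$ in \eqref{eq:KL-representation-1}. Uniqueness is immediate from strict convexity of $\operatorname{KL}(\cdot,\hat f^w)$ on the convex set $\Pi(\hat f_1,\hat f_2)$. For existence, it is enough to exhibit a single element of $\Pi(\hat f_1,\hat f_2)$ with finite KL divergence to $\hat f^w$: the uniform lower bound $c>0$ in \cref{as:abs-cont}(iv), combined with (ii)--(iii), makes $\operatorname{KL}(\hat f_1\hat f_2,\hat f^w)$ finite after a direct bound, and the standard argument of \citet{csiszar1975divergence} then delivers the KL projection.

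Finally, invoking Theorem~3.5 of \citet{ruschendorf1995convergence} produces convergence of $\hat f^{(t)}$ to $\hat f$ in total variation, which under the product reference measure $\mu$ is exactly $L^1(\mu)$-convergence. The main obstacle is largely bookkeeping: unpacking Rüschendorf's formulation, which is phrased in terms of general probability measures and Radon-Nikodym derivatives, into the density-level statement given here, and verifying that \cref{as:abs-cont}(iv) is indeed what his convergence step requires of the reference density, rather than only the weaker condition needed to secure existence of the projection.
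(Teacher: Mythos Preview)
Your proposal is correct and follows the same route as the paper: invoke Theorem~3.5 of \citet{ruschendorf1995convergence} to obtain total-variation convergence of the IPFP iterates to the KL projection, then translate total variation into $L^1$. The paper's own proof is in fact shorter than yours---it simply cites R\"uschendorf and uses $\operatorname{TV}=\tfrac12\|\cdot\|_1$ without spelling out the verification of hypotheses---so your additional checks on well-definedness and existence are a welcome elaboration rather than a departure.
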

\begin{proof}
    See \Cref{app:thm1}.
\end{proof}

The iterative equation of the theorem is the general data version of \emph{raking}, also known as \emph{iterative proportional fitting} or \emph{Sinkhorn's algorithm}.\footnote{
Raking was introduced by \citet{deming1940least} to adjust contingency tables to given marginals and was generalized to the continuous case by \citet{ireland1968contingency} and \citet{kullback1968probability}.}
It is widely used in survey statistics and, since the seminal contribution of \citet{cuturi2013sinkhorn}, has attracted substantial attention in machine learning due to its application to solving optimal transport problems with entropic regularization.

\cref{thm:raking} suggests the following two-step estimation strategy.

\begin{enumerate}
    \item Use (\ref{eq:KL-representation}) with estimated densities  $\hat f_1$, $\hat f_2$, and $\hat f^w$ to obtain an estimator $\hat f$ computed using raking as in \cref{thm:raking}:
    \begin{align*}
        \hat f = \phi_{\hat f_1,\hat f_2}^{(T)}(\hat f^w),
    \end{align*}
    where the number of iterations $T$ is determined by a stopping rule, e.g., the maximum difference between subsequent iterations being smaller than some tolerance level.
    


    \item Compute $\hat \theta$ as a solution of the moment restriction
    \begin{align}
        \iint \varphi(z_1,z_2,\hat\theta) \hat f(z_1,z_2) \, d\mu(z_1,z_2) = 0. \label{eq:thetahat}
    \end{align}   
\end{enumerate}

We now show consistency of $\hat f$ and $\sqrt n$-consistency of $\hat \theta$. 
For simplicity, in what follows we assume that $\mu$ is the Lebesgue measure, i.e., all the variables in the data are continuous.
The general case can be handled similarly at the expense of more cumbersome notation.

Denote by $\mathcal{Z} \subset \R^{2d}$ the support of $f_1\times f_2$ and let $\mathcal{K}$ be the set of uniformly bounded, additively separable functions on $\mathcal{Z}$, i.e.,
\[
\mathcal{K} \bydef \left\{ k:\mathcal{Z} \to \R \text{ s.t. } \|k\|_\infty \le K \text{ and } k(z_1,z_2)=k_1(z_1)+k_2(z_2) \text{ for some } k_1,k_2 \right\},
\]
where $K>0$ is a finite constant and $\|k\|_\infty = \sup_{z\in\mathcal{Z}} |k(z)|$ is the uniform norm.
Notice that the functions  $k_1$ and $k_2$ are unique only up to addition and subtraction of a constant, but this does not play any role in our arguments.

We make the following assumptions.

\begin{assumption}[additive nonignorability + exponential link function]\label{as:AN-alt}
    The function $k_0(z_1,z_2) \bydef \log (f(z_1,z_2)/f^w(z_1,z_2))$ belongs to the set $\mathcal{K}$.
\end{assumption}

\begin{assumption}[identification of $\theta$]\label{as:identification}
\begin{subassumption}
    \item The parameter space $\Theta \subset \R^{d_{\theta}}$ is compact.
    \item $\E_f [\varphi(Z_1,Z_2,\theta)]=0$ implies $\theta=\theta_0$.
    \item $\varphi(z_1,z_2,\theta)$ is twice continuously differentiable in $\theta$ on $\Theta$ for all $z_1,z_2$.
    \item $\E \left[\frac{\partial}{\partial \theta'}\varphi(Z_1,Z_2,\theta_0)\right]$ is nonsingular.
    \item $\varphi(z,\theta)$ and $\frac{\partial \varphi}{\partial\theta'}(z,\theta)$ are bounded on $\mathcal{Z} \times \Theta$.
\end{subassumption}
\end{assumption}

\begin{assumption}[densities]\label{as:densities-alt}
\begin{subassumption}
    \item The densities $f^w(z_1,z_2;\gamma_w),f_1(z_1;\gamma_1),f_2(z_2;\gamma_2)$ belong to a parametric family and are continuously differentiable with respect to their parameters $\gamma_1,\gamma_2, \gamma_w$ that are identifiable.
    \item The gradients of densities are integrable locally uniformly in the parameters in the following sense: there exists a constant $C$ such that
    \begin{align*}
        \max\left\{\iint\left\|\frac{\partial f^w}{\partial \gamma_w}(z_1,z_2;\bar\gamma_w)\right\|dz_1 dz_2, \,\, \int\left\|\frac{\partial f_1}{\partial \gamma_1}(z_1;\bar\gamma_1)\right\|dz_1, \,\,
        \int\left\|\frac{\partial f_2}{\partial \gamma_2}(z_2;\bar\gamma_2)\right\|dz_2 \right\} \le C
    \end{align*}
    for all $\bar\gamma_w,\bar\gamma_1,\bar\gamma_2$ in a neighborhood of the population values of $\gamma_w,\gamma_1,\gamma_2$.
\end{subassumption}
\end{assumption}

\Cref{as:AN-alt} is equivalent to imposing both \Cref{as:AN,as:AN-exp}.
\Cref{as:identification} is a standard assumption ensuring identification of $\theta$.
\Cref{as:densities-alt}(i) imposes a parametric model on the observed data.
A sufficient condition for \Cref{as:densities-alt}(ii) is that the information matrices associated with the parametric densities $f^w,f_1,f_2$ are bounded in some neighborhood of the population values of $\gamma_w,\gamma_1,\gamma_2$.

We estimate $\gamma_1,\gamma_2, \gamma_w$ by maximum likelihood and denote the corresponding density estimators $\hat f^w(z_1,z_2)=f^w(z_1,z_2;\hat \gamma_{w})$, $\hat f_1(z_1)=f_1(z_1;\hat \gamma_{1})$, $\hat f_2(z_2)=f_2(z_2;\hat \gamma_{2})$.
We omit the arguments $\gamma_1,\gamma_2,\gamma_w$ when the densities are evaluated at their population parameter values.


\begin{assumption}[maximum likelihood estimators]\label{as:mle}
$\hat\gamma_1, \hat\gamma_2,\hat\gamma_w$ are $\sqrt{n}$-consistent with the expectation bounds
    \begin{align*}
        \E \|\hat\gamma_1-\gamma_1\| = O(1/\sqrt{n}), \quad \E \|\hat\gamma_2-\gamma_2\| = O(1/\sqrt{n}), \quad \E \|\hat\gamma_w-\gamma_w\| = O(1/\sqrt{n}).
    \end{align*}
\end{assumption}



A key idea allowing us to establish the $\sqrt{n}$-convergence of $\hat f$ and $\hat\theta$ is to switch from the primal characterization of $f$ via a \emph{constrained} program \eqref{eq:KL-representation} to the dual characterization \eqref{eq:key-identity-exp} with $k_1,k_2$ solving an \emph{unconstrained} program.
This duality result is not new (see, e.g., Section 4 in \cite{bhattacharya2006iterative} and Theorem 2.3 in \cite{bhattacharya1995general}), but we provide a proof to keep the presentation self-contained.
\begin{lem}[duality]
    Denote
    \begin{align*}
        \M(k) &\bydef \iint e^{k_1(z_1)+k_2(z_2)} f^w(z_1,z_2)\,dz_1 dz_2 - \int k_1(z_1) f_1(z_1) \, dz_1 - \int k_2(z_2) f_2(z_2) \, dz_2, \\
        \M_n(k) &\bydef \iint e^{k_1(z_1)+k_2(z_2)} \hat f^w(z_1,z_2)\,dz_1 dz_2 - \int k_1(z_1) \hat f_1(z_1) \, dz_1 - \int k_2(z_2) \hat f_2(z_2) \, dz_2.
    \end{align*}
    Then the population density $f$ (the estimator $\hat f$) is related to the population density $f^w$ (the estimator $\hat f^w$) via
    \begin{align}
        f(z_1,z_2) &= e^{k_0(z_1,z_2)} f^w(z_1,z_2), \label{eq:f-via-k0} \\
        \hat f(z_1,z_2) &= e^{\hat k(z_1,z_2)} \hat f^w(z_1,z_2),
    \end{align}
    where
    \begin{align}
        k_0 &= \arg\min_{k \in \mathcal{K}} \M(k), \label{eq:k0-via-dual} \\
        \hat k &= \arg\min_{k \in \mathcal{K}} \M_n(k).
    \end{align}
    
\end{lem}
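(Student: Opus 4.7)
The plan is to recast the constrained primal problem \eqref{eq:KL-representation} as the unconstrained convex program $\min_{k \in \mathcal{K}} \M(k)$ via Lagrangian duality, and then apply the identical argument to the sample analog \eqref{eq:KL-representation-1} with the estimated densities. I would treat the population and sample statements in parallel, since the derivation is structurally the same.

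First, I would form the Lagrangian for the KL projection of $f^w$ onto $\Pi(f_1, f_2)$ using pointwise multipliers $\lambda_1(z_1)$ and $\lambda_2(z_2)$ for the two marginal constraints. Taking the pointwise functional derivative in $\tilde f(z_1,z_2)$ and setting it to zero yields $\log(\tilde f / f^w) = \lambda_1(z_1) + \lambda_2(z_2) - 1$, so any primal optimum has the exponential-tilt form $\tilde f = e^{k_1(z_1) + k_2(z_2)} f^w$, with $k_j$ a constant shift of $\lambda_j$. This immediately delivers the factorization $f = e^{k_0} f^w$ (consistent with \Cref{as:AN-alt}) and its sample counterpart $\hat f = e^{\hat k} \hat f^w$. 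Substituting the exponential-tilt form back into the Lagrangian and absorbing an irrelevant additive constant collapses the dual to $\min_{k \in \mathcal{K}} \M(k)$, identifying $k_0$ as its minimizer; the identical manipulation for the empirical problem gives $\hat k = \arg\min_{k \in \mathcal{K}} \M_n(k)$.

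Alternatively, and more concretely, one can bypass the Lagrangian and verify optimality directly. Differentiating $\M$ pointwise in $k_1(z_1)$ and $k_2(z_2)$ produces the first-order conditions
\begin{align*}
\Pi_1\!\left( e^{k_1 + k_2} f^w \right) = f_1, \qquad \Pi_2\!\left( e^{k_1 + k_2} f^w \right) = f_2,
\end{align*}
which are exactly the marginal constraints that $f = e^{k_0} f^w$ satisfies by construction. Strict convexity of $\M$ in $k$ follows from strict convexity of $t \mapsto e^{t} f^w(z)$ plus linearity of the remaining terms, so $k_0$ is the unique minimizer, up to the harmless shift $k_1 \to k_1 + c,\; k_2 \to k_2 - c$ that leaves $k_1 + k_2$ unchanged. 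The same FOC-plus-convexity argument applied to $\M_n$ at the exponential-tilt representation of $\hat f$ yields $\hat k = \arg\min_{k \in \mathcal{K}} \M_n(k)$.

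The main obstacle is justifying the pointwise functional-derivative manipulations rigorously. Differentiating under the integral requires dominated-convergence-type bounds, supplied by the uniform bound $\|k\|_\infty \le K$ defining $\mathcal{K}$ together with \Cref{as:abs-cont}(iv), which keep $e^{k_1 + k_2} f^w$ and its variations integrable. A secondary subtlety is that $K$ must be taken large enough that both $k_0$ and $\hat k$ lie strictly inside $\mathcal{K}$, so that the unconstrained FOCs actually characterize the minimizer; in the population this is guaranteed by \Cref{as:AN-alt}, and in the sample it follows from $\sqrt{n}$-consistency of the plug-in densities under \Cref{as:mle}.
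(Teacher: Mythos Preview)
Your primary approach---forming the Lagrangian with pointwise multipliers $\lambda_1,\lambda_2$, taking the functional derivative in $\tilde f$ to obtain the exponential-tilt form, and substituting back to collapse the dual to $\min_{k}\M(k)$---is exactly the argument the paper gives (following \citet{boyd2004convex}, with the same change of variables $k_1=-1-\lambda_1$, $k_2=-\lambda_2$). Your alternative direct-verification route via the FOCs of $\M$ and strict convexity is also valid and slightly more self-contained, though the paper does not take it.
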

\begin{proof}
    See \Cref{app:lem1}.
\end{proof}

Finally, we establish $\sqrt{n}$-consistency of the raking estimator of $f$ and the estimator of $\theta$.

\begin{thm}
    Suppose that \cref{as:AN-alt,as:identification,as:abs-cont,as:densities-alt,as:mle} hold.
    Then $\|\hat f-f\|_1 = O_p \left (1/\sqrt n \right)$ and $\hat\theta-\theta_0 = O_p\left (1/\sqrt n  \right )$, where $||f||_1 =\int |f(z)| dz$.
\end{thm}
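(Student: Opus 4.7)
My plan is to exploit the duality lemma to recast the problem as a rate of convergence for the dual variable $\hat k$ approaching $k_0$, and then translate back to rates on $\hat f$ and $\hat\theta$. The argument breaks into four steps: (i) establish parametric $L^1$ rates for the plug-in densities $\hat f^w, \hat f_1, \hat f_2$; (ii) transfer these rates to $\hat k - k_0$ via an implicit-function-theorem argument applied to the first-order conditions of $\M_n$ and $\M$; (iii) translate $\|\hat k - k_0\|_\infty = O_p(1/\sqrt n)$ into $\|\hat f - f\|_1 = O_p(1/\sqrt n)$ using the representation $\hat f = e^{\hat k}\hat f^w$; and (iv) invoke a standard Z-estimator expansion to conclude $\hat\theta - \theta_0 = O_p(1/\sqrt n)$.

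Step (i) follows from the $\sqrt n$-consistency of the MLEs in \cref{as:mle}, a first-order Taylor expansion in the finite-dimensional parameters, and the uniform integrability of density gradients in \cref{as:densities-alt}(ii); these together yield $\|\hat f^w - f^w\|_1,\|\hat f_1 - f_1\|_1,\|\hat f_2 - f_2\|_1 = O_p(1/\sqrt n)$. For step (ii), the FOC for $\hat k = \hat k_1 + \hat k_2$ amounts to the marginal-matching system $\Pi_1(e^{\hat k}\hat f^w) = \hat f_1$ and $\Pi_2(e^{\hat k}\hat f^w) = \hat f_2$, with the analogous system at the population level characterizing $k_0$. Writing $\Psi(k;\eta) = 0$ for this system with $\eta = (f^w, f_1, f_2)$, the Fréchet derivative $\partial \Psi / \partial \eta$ is linear and its action on $\hat\eta - \eta_0$ is $O_p(1/\sqrt n)$ by step (i). The derivative $\partial \Psi / \partial k$ at $(k_0, \eta_0)$ coincides with the Hessian of $\M$ at $k_0$, i.e.\ the operator $h \mapsto (\Pi_1(hf), \Pi_2(hf))$. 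Bounded invertibility of this operator on $\mathcal{K}$ (modulo the constant-shift null space) then yields, via a Newton-step / implicit-function expansion, $\|\hat k - k_0\|_\infty = O_p(1/\sqrt n)$.

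Step (iii) follows from the decomposition $\hat f - f = (e^{\hat k} - e^{k_0})\hat f^w + e^{k_0}(\hat f^w - f^w)$, combined with the mean-value bound $|e^{\hat k} - e^{k_0}| \le e^{K}|\hat k - k_0|$ on $\mathcal{Z}$; since $\hat f^w$ integrates to one, both terms are $O_p(1/\sqrt n)$ in $L^1$. For step (iv), \cref{as:identification}(v) gives $\bigl|\int \varphi(z,\theta)[\hat f - f]\,dz\bigr| \le \|\varphi(\cdot,\theta)\|_\infty\,\|\hat f - f\|_1 = O_p(1/\sqrt n)$ uniformly in $\theta \in \Theta$; identification (\cref{as:identification}(ii)) first delivers consistency $\hat\theta \pTo \theta_0$, after which a Taylor expansion using \cref{as:identification}(iii,iv) around $\theta_0$ yields $\hat\theta - \theta_0 = O_p(1/\sqrt n)$.

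The main obstacle is step (ii): establishing bounded invertibility of the Hessian operator on the infinite-dimensional space $\mathcal{K}$. Because $\mathcal{K}$ consists of additively separable functions, there is a constant-shift null space $\{h_1 \equiv c,\,h_2 \equiv -c\}$ which must be quotiented out; beyond that, a quantitative lower bound of the form $\int (h_1+h_2)^2 f\,dz \ge c\,(\|h_1\|_\infty^2 + \|h_2\|_\infty^2)$ requires ruling out strong degeneracies between $h_1(Z_1)$ and $h_2(Z_2)$ under $f$. The uniform lower bound on $f^w$ in \cref{as:abs-cont}(iv), together with the uniform boundedness of $k \in \mathcal{K}$ and the support conditions in \cref{as:abs-cont}(ii,iii), should yield this after some work, but carrying out the linearization rigorously in the function space $\mathcal{K}$ is the most delicate part of the argument.
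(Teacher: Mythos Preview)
Your steps (i), (iii), and (iv) match the paper's proof almost exactly: the paper obtains $L^1$ rates on the parametric density estimators via a mean-value expansion combined with Assumptions~\ref{as:densities-alt} and~\ref{as:mle}; it decomposes $\hat f - f = e^{\hat k}(\hat f^w - f^w) + f^w(e^{\hat k} - e^{k_0})$ (algebraically equivalent to your decomposition) and bounds each piece using $|e^x - e^y|\le e^{\max(x,y)}|x-y|$; and it concludes with the standard Z-estimator mean-value expansion under Assumption~\ref{as:identification}.

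The substantive difference is step (ii). Rather than linearize the first-order conditions and invoke an implicit function theorem on the Hessian, the paper applies the M-estimation rate theorem of \citet{vdv} (Theorem~3.2.5) directly to the dual criteria $\M_n$ and $\M$. The only computation needed is the modulus of continuity of the centered criterion,
\[
(\M_n-\M)(k)-(\M_n-\M)(k_0)=\iint(e^{k}-e^{k_0})(\hat f^w-f^w)+\int(k_1-k_{10})(\hat f_1-f_1)+\int(k_2-k_{20})(\hat f_2-f_2),
\]
whose supremum over $\|k-k_0\|_\infty<\delta$ is bounded by a constant times $\delta$ multiplied by the $L^1$ errors of the density estimates; this gives $\phi_n(\delta)=\delta$ and hence the $\sqrt n$ rate. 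This route is more economical than yours because it never constructs or inverts an infinite-dimensional operator. That said, Theorem~3.2.5 still requires a curvature condition $\M(k)-\M(k_0)\gtrsim \|k-k_0\|_\infty^2$ near $k_0$, which is exactly the non-degeneracy you flag for Hessian invertibility, and the paper does not spell out its verification either. Your concern about an inequality of the form $\int(h_1+h_2)^2 f\ge c(\|h_1\|_\infty^2+\|h_2\|_\infty^2)$ is therefore well placed and applies to both arguments; the empirical-process framework simply packages all the remaining ingredients more cleanly once that condition is taken for granted.
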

\begin{proof}
    See \Cref{app:thm2}.
\end{proof}

\section{Inference}\label{sec:inference}

The asymptotic distribution of the moment estimator of $\theta$ depends on the asymptotic distribution of the estimator (\ref{eq:KL-representation-1}) of the pseudo-parameter $f$ defined in (\ref{eq:KL-representation}). That estimator is computed as in \cref{thm:raking}. The steps are analogous to the steps in the derivation of the asymptotic distribution of an MLE that is computed by a numerical optimization algorithm. The algorithm is recursive, and the MLE is the result after a finite number of iterations. The first-order condition is solved for the MLE, and the asymptotic distribution of this solution is the asymptotic distribution of the MLE. The estimator (\ref{eq:KL-representation-1}) is computed by a finite number of iterations as in \cref{thm:raking}. The asymptotic distribution of the estimator (\ref{eq:KL-representation-1}) is therefore equal to $T$-fold recursion in \cref{thm:raking}. 

Under \Cref{as:densities-alt}, the marginal densities of $z_1,z_2$ and the joint density of the observed $z_1,z_2$ are in a parametric family. Their parameters can be estimated by MLE. The asymptotic distribution of the estimator of $f$ can therefore be derived using the delta method. Define $\gamma=(\gamma_1',\gamma_2',\gamma_w')'$ and let $D_\gamma$ be the derivative operator with respect to $\gamma$. Define $m(a,b,c) \bydef ab/c$ and let $m_1=b/c$, $m_2=a/c$, and $m_3=-ab/c^2$ be its derivatives with respect to $a,$ $b$, and $c$, respectively. Let $V(\hat\gamma)$ be the asymptotic variance of an estimator $\hat\gamma$ of $\gamma$, such as the inverse information matrix when $\hat\gamma$ is the maximum likelihood estimator.

\begin{thm}\label{thm:asy-var}
    The asymptotic variance of $\hat\theta$ is $V(\hat\theta) = A_0 \cdot V(\hat\gamma) \cdot A_0'$, where 
    \begin{align*}
        A_0 = \left( D_\theta \iint \varphi(z_1,z_2;\theta_0) f(z_1,z_2) \, dz_1 dz_2 \right)^{-1} \iint \varphi(z_1,z_2;\theta_0) D_\gamma f(z_1,z_2) \,dz_1 dz_2.
    \end{align*}
    The matrix $A_0$ can be estimated by
    \begin{align*}
        \hat A_0 = \left( D_\theta \iint \varphi(z_1,z_2;\hat \theta) \hat f(z_1,z_2) \, dz_1 dz_2 \right)^{-1} \iint \varphi(z_1,z_2;\hat \theta) D_\gamma \hat f(z_1,z_2) \,dz_1 dz_2.
    \end{align*}
    Here $\hat f(z_1,z_2)=\hat f^{(T)}(z_1,z_2)$ is the $T$-th iterate of
    \begin{equation}\label{fhat-iter}
        \hat f^{(t)}(z_1,z_2)= \frac{f_2(z_2;\hat \gamma_2)m \left (f_1(z_1;\hat \gamma_1), \hat f^{(t-1)}(z_1,z_2), \Pi_1 \hat f^{(t-1)}(z_1,z_2)\right )}{\Pi_2 m\left ( f_1(z_1; \hat \gamma_1), \hat f^{(t-1)}(z_1,z_2), \Pi_1 \hat f^{(t-1)}(z_1,z_2) \right )}
    \end{equation}
with $\hat f^{(0)}(z_1,z_2)=f^w(z_1,z_2; \hat \gamma_w)$,
\[
D_{\gamma} \hat f(z_1,z_2)=D_{\gamma}\hat f^{(T)}(z_1,z_2)
\]    
and 
\[
D_{\gamma}\hat f^{(t)}(z_1,z_2)= \left ( \begin{array}{c}
   D_{\gamma_1}\hat f^{(t)}(z_1,z_2)   \\
      D_{\gamma_2}\hat f^{(t)}(z_1,z_2)\\
       D_{\gamma_w}\hat f^{(t)}(z_1,z_2) 
\end{array}  \right )
\]
 with iteration
 \begin{align*}
 &D_{\gamma_1}\hat f^{(t)}(z_1,z_2) = \\
 &\frac{f_2(z_2;\hat \gamma_2) \left ( m_1 (z_1,z_2) D_{\gamma_1}f_1(z_1;\hat \gamma_1)+m_2 (z_1,z_2 ) D_{\gamma_1}\hat f^{(t-1)}(z_1,z_2)+m_3(z_1,z_2)\Pi_1 D_{\gamma_1}\hat f^{(t-1)}(z_1,z_2)\right )}{\Pi_2 m\left ( f_1(z_1; \hat \gamma_1), \hat f^{(t-1)}(z_1,z_2), \Pi_1 \hat f^{(t-1)}(z_1,z_2) \right )}-\\
 &\frac{f_2(z_2;\hat \gamma_2)m \left (f_1(z_1;\hat \gamma_1), \hat f^{(t-1)}(z_1,z_2), \Pi_1 \hat f^{(t-1)}(z_1,z_2)\right )}{\left (\Pi_2 m\left ( f_1(z_1; \hat \gamma_1), \hat f^{(t-1)}(z_1,z_2), \Pi_1 \hat f^{(t-1)}(z_1,z_2) \right )\right)^2} \times \\
 &\Pi_2\left ( m_1 (z_1,z_2) D_{\gamma_1}f_1(z_1;\hat \gamma_1)+m_2 (z_1,z_2 ) D_{\gamma_1}\hat f^{(t-1)}(z_1,z_2)+m_3(z_1,z_2)\Pi_1 D_{\gamma_1}\hat f^{(t-1)}(z_1,z_2)\right),
 \end{align*}
 \begin{align*}
     &D_{\gamma_2}\hat f^{(t)}(z_1,z_2) = \\
 &\frac{m \left (f_1(z_1;\hat \gamma_1), \hat f^{(t-1)}(z_1,z_2), \Pi_1 \hat f^{(t-1)}(z_1,z_2)\right ) D_{\gamma_2}f_2(z_2;\hat \gamma_2)}{\Pi_2 m\left ( f_1(z_1; \hat \gamma_1), \hat f^{(t-1)}(z_1,z_2), \Pi_1 \hat f^{(t-1)}(z_1,z_2) \right )}+\\
&\frac{f_2(z_2;\hat \gamma_2)\left( m_2(z_1,z_2)D_{\gamma_2}\hat f^{(t-1)}(z_1,z_2)+m_3(z_1,z_2)\Pi_1 D_{\gamma_2}\hat f^{(t-1)}(z_1,z_2)\right)}{\Pi_2 m\left ( f_1(z_1; \hat \gamma_1), \hat f^{(t-1)}(z_1,z_2), \Pi_1 \hat f^{(t-1)}(z_1,z_2) \right )}-\\
 &\frac{f_2(z_2;\hat \gamma_2)m \left (f_1(z_1;\hat \gamma_1), \hat f^{(t-1)}(z_1,z_2), \Pi_1 \hat f^{(t-1)}(z_1,z_2)\right )}{\left (\Pi_2 m\left ( f_1(z_1; \hat \gamma_1), \hat f^{(t-1)}(z_1,z_2), \Pi_1 \hat f^{(t-1)}(z_1,z_2) \right )\right)^2} \times \\
&\Pi_2\left (m_2 (z_1,z_2 ) D_{\gamma_2}\hat f^{(t-1)}(z_1,z_2)+m_3(z_1,z_2)\Pi_1 D_{\gamma_2}\hat f^{(t-1)}(z_1,z_2)\right),
 \end{align*}
 \begin{align*}
 &D_{\gamma_w}\hat f^{(t)}(z_1,z_2) = \\
 &\frac{f_2(z_2;\hat \gamma_2) \left ( m_2 (z_1,z_2 ) D_{\gamma_w}\hat f^{(t-1)}(z_1,z_2)+m_3(z_1,z_2)\Pi_1 D_{\gamma_w}\hat f^{(t-1)}(z_1,z_2)\right )}{\Pi_2 m\left ( f_1(z_1; \hat \gamma_1), \hat f^{(t-1)}(z_1,z_2), \Pi_1 \hat f^{(t-1)}(z_1,z_2) \right )}-\\
 &\frac{f_2(z_2;\hat \gamma_2)m \left (f_1(z_1;\hat \gamma_1), \hat f^{(t-1)}(z_1,z_2), \Pi_1 \hat f^{(t-1)}(z_1,z_2)\right )}{\left (\Pi_2 m\left ( f_1(z_1; \hat \gamma_1), \hat f^{(t-1)}(z_1,z_2), \Pi_1 \hat f^{(t-1)}(z_1,z_2) \right )\right)^2} \times \\
 &\Pi_2\left ( m_2 (z_1,z_2 ) D_{\gamma_w}\hat f^{(t-1)}(z_1,z_2)+m_3(z_1,z_2)\Pi_1 D_{\gamma_w}\hat f^{(t-1)}(z_1,z_2)\right).
 \end{align*}
In the above equations,
\[
m_k(z_1,z_2)=m_k (f_1(z_1;\hat \gamma_1), \hat f^{(t-1)}(z_1,z_2), \Pi_1 \hat f^{(t-1)}(z_1,z_2)), \ \ \  k=1,2,3,
\]
and the initial conditions are
\[
D_{\gamma}\hat f^{(0)}(z_1,z_2)= \left ( \begin{array}{c}
   D_{\gamma_1}\hat f^{(0)}(z_1,z_2)   \\
      D_{\gamma_2}\hat f^{(0)}(z_1,z_2)\\
       D_{\gamma_w}\hat f^{(0)}(z_1,z_2) 
\end{array}  \right )=  
\left ( \begin{array}{c} 
0\\
0\\
D_{\gamma_w}f^w(z_1,z_2; \hat \gamma_w)\\
\end{array}  \right ).
\]

\end{thm}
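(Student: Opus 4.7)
My plan is to combine a standard delta-method argument for $\hat\theta$ viewed as a smooth function of $\hat\gamma=(\hat\gamma_1',\hat\gamma_2',\hat\gamma_w')'$ with a chain-rule computation for the derivative of the raking iterate with respect to $\gamma$. By construction, $\hat\theta$ satisfies $\iint\varphi(z_1,z_2;\hat\theta)\hat f(z_1,z_2)\,dz_1dz_2=0$, and $\hat f=\hat f^{(T)}$ depends on the data only through $\hat\gamma$. Hence $\hat\theta$ is a smooth functional of $\hat\gamma$ alone, and deriving $V(\hat\theta)$ reduces to (i) differentiating this functional at $\gamma$ to obtain $A_0$ and (ii) propagating the derivative through the raking recursion to obtain the stated formulas for $D_\gamma\hat f^{(t)}$.

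For step (i), I would apply the implicit function theorem to the map $\gamma\mapsto\theta(\gamma)$ defined by $\iint\varphi(z_1,z_2;\theta(\gamma))\,f(z_1,z_2;\gamma)\,dz_1dz_2=0$. Differentiating both sides in $\gamma$ and solving for $\partial\theta/\partial\gamma'$ at the true parameter values, using the nonsingularity of $\E[\partial\varphi/\partial\theta'(Z_1,Z_2,\theta_0)]$ from \Cref{as:identification}(iv), produces exactly the matrix $A_0$. Combined with the $\sqrt n$-consistency and asymptotic normality of $\hat\gamma$ supplied by \Cref{as:mle}, the delta method yields $\sqrt n(\hat\theta-\theta_0)\weakto\mathcal{N}(0,A_0 V(\hat\gamma) A_0')$. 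Consistency of $\hat A_0$ then follows from consistency of $\hat\theta$ proved in the previous theorem together with consistency of $\hat f$ and of the plug-in derivative $D_\gamma\hat f$ evaluated at $\hat\gamma$.

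For step (ii), I would differentiate the explicit raking update \eqref{fhat-iter} componentwise in $\gamma_1,\gamma_2,\gamma_w$. Writing the update as a product of $f_2(z_2;\hat\gamma_2)$ and the ratio of $m(f_1,\hat f^{(t-1)},\Pi_1\hat f^{(t-1)})$ to its $\Pi_2$-marginal, one applies the quotient rule to the outer ratio and the chain rule to the inner composition, using the partial derivatives $m_1,m_2,m_3$ of $m(a,b,c)=ab/c$. The three cases of $\gamma_k$ differ only in which direct dependencies are present: for $\gamma_1$ the argument $f_1$ contributes a term proportional to $m_1\cdot D_{\gamma_1}f_1$; for $\gamma_2$ the outer factor $f_2$ produces an additional summand; and for $\gamma_w$ the dependency enters only through $\hat f^{(t-1)}$. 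Substituting these and using linearity of $\Pi_2$ to move the derivative inside the marginalization reproduces exactly the displayed recursions. The initial conditions are immediate because $\hat f^{(0)}=f^w(\cdot;\hat\gamma_w)$ does not depend on $\hat\gamma_1$ or $\hat\gamma_2$.

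The main obstacle is justifying the interchange of differentiation with the marginalization operators $\Pi_1,\Pi_2$ and controlling the denominators $\Pi_1\hat f^{(t-1)}$ and $\Pi_2 m(\cdot)$ along the iteration. \Cref{as:densities-alt}(ii) supplies integrable dominating functions that allow differentiating under the integral sign, while \Cref{as:abs-cont}(iv) and the positivity of $f_1,f_2$ give positive lower bounds on the relevant marginals at $t=0$; a simple induction on $t$ shows that each raking step preserves such a bound, keeping the recursion uniformly bounded away from singularities. Given these bounds, the chain rule is legitimate term by term, an induction in $t$ transfers $\sqrt n$-consistency of $\hat\gamma$ to $\sqrt n$-consistency of $D_\gamma\hat f^{(t)}$ for every fixed $t\le T$, and $\hat A_0\pTo A_0$ follows by continuous mapping, completing the argument.
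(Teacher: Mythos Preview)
Your proposal is correct and follows essentially the same approach as the paper: the paper's proof consists of the single sentence ``Differentiation of \eqref{fhat-iter} with respect to $\gamma$ yields the iteration,'' with the delta-method justification for $V(\hat\theta)=A_0 V(\hat\gamma)A_0'$ relegated to the discussion preceding the theorem. Your write-up is considerably more detailed than the paper's --- in particular, your care about interchanging differentiation with $\Pi_1,\Pi_2$ and bounding denominators along the iteration goes beyond what the paper verifies --- but the underlying argument is the same.
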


\begin{proof}
   Differentiation of (\ref{fhat-iter}) with respect to $\gamma$ yields the iteration. 
\end{proof}

\section{Numerical implementation}\label{sec:numerical}

Both raking iterations and moment conditions require computing integrals.
To do it quickly and reliably, we suggest using Monte Carlo integration on a fixed rectangular grid in terms of the variables $z_1,z_2$ that we generate randomly from distributions that dominate $Z_1$ and $Z_2$.

To describe our procedure, suppose both $f_1$ and $f_2$ are absolutely continuous with respect to a dominating measure $\mu$. For instance, when the data are one discrete and one continuous variable, $\mu$ is the product of a counting measure and the Lebesgue measure. Notice that integrals in raking iterations and moment conditions are calculated with respect to $\mu$.

Let $\phi_1$ and $\phi_2$ be densities on $\R^d$ such that $f_1$ and $f_2$ are absolutely continuous w.r.t. $\phi_1$ and $\phi_2$, respectively.
Typically, $\phi_1=\phi_2$.
Let $z_{11},\dots, z_{1S_1}$ and $z_{21},\dots,z_{2S_2}$ denote independent draws from distributions $\phi_1$ and $\phi_2$, respectively.
Fix the rectangular grid $\{z_{11},\dots,z_{1S_1}\} \times \{z_{21},\dots,z_{2S_2}\}$ from now on. 
We only need to calculate the integrals involved in raking and possibly moment conditions on this grid.

For illustration, suppose $Z=(Y,X)'$, where $X$ is discrete with support $\mathcal{X} = \{x_1,\dots,x_R\}$ and $Y$ is continuous.
Let $\phi_X$ be a probability mass function on $\mathcal{X}$, i.e. $\phi_X(x) >0$ for $x\in\mathcal{X}$, $\phi_X(x)=0$ for $x\notin \mathcal{X}$, and $\phi_X(x_1)+\dots+\phi_X(x_R)=1$.
Let $\phi_Y$ be a probability density on $\R^1$, e.g., a Gaussian density.
Then, for period 1, we can independently sample $S_1$ draws from $\phi_X$ and $S_1$ draws from $\phi_Y$, obtaining a combined sample $\mathcal{Z}_1= \{(x_{11},y_{11}),\dots,(x_{1S_1},y_{1S_1})\}$.
Similarly, for period 2, we obtain a sample $\mathcal{Z}_2= \{(x_{21},y_{21}),\dots,(x_{2S_2},y_{2S_2})\}$ using the same $\phi_X$ and $\phi_Y$. Then $\phi_1=\phi_2$ and the grid is $\mathcal{Z}_1 \times \mathcal{Z}_2$.
The density $\phi=\phi_1=\phi_2$ with respect to the product of the counting measure on $\mathcal{X}$ and the Lebesgue measure on $\R^1$ is
\[
\phi(z)=\phi_X(x)\phi_Y(y) \text{ for } z=(x,y) \in \R^2.
\]

We now describe our integral approximations. The first step of our procedure, raking, requires approximations of integrals with respect to one period variable. For example, an odd iteration of raking is
\begin{align*}
    f^{(t+1)}(z_{1i},z_{2j}) = \frac{f_1(z_{1i}) f^{(t)}(z_{1i},z_{2j})}{\int f(z_{1i},z_2) \, d\mu(z_2) }.
\end{align*}
Here, we approximate the integral in the denominator as 
\[
\int f(z_{1i},z_2) \, d\mu(z_2) \approx \frac{1}{S_2} \sum_{j=1}^{S_2} \frac{f(z_{1i},z_{2j})}{\phi_2(z_{2j})}.
\]
Similarly, for an even iteration, we approximate the denominator as 
\[
\int f(z_1,z_{2j}) \, d\mu(z_1) \approx \frac{1}{S_1} \sum_{i=1}^{S_1} \frac{f(z_{1i},z_{2j})}{\phi_1(z_{1i})}.
\]

The second step of our procedure, calculating the moment conditions, requires the approximation of integrals with respect to the raked distribution.
If the model is implemented reliably in software packages (say, two-way fixed effects regression), we can draw a very large sample from the raking estimator $\hat f$ and treat it as the input to the existing routine.
This will then be numerically close to solving the moment conditions with plugged-in $\hat f$.
On the other hand, if calculating the moment conditions is necessary, we approximate $\E_f \varphi(Z_1,Z_2)$ as
\begin{align*}
    \iint \varphi(z_1,z_2) f(z_1,z_2) \, d\mu(z_1) d\mu(z_2) &\approx \frac{1}{S_1 S_2} \sum_{i=1}^{S_1} \sum_{j=1}^{S_2} \frac{\varphi(z_{1i},z_{2j}) f(z_{1i},z_{2j})}{\phi_1(z_{1i}) \phi_2(z_{2j})}.
\end{align*}
\section{Monte Carlo simulation}\label{sec:mc}

In this section, we illustrate the performance of our estimation procedure in a set of Monte Carlo simulations.
We employ two data-generating processes (DGPs) with discrete and continuous variables.

The discrete DGP is a Markov chain with states $\{0,\dots,d-1\}$, where $Z_1$ is distributed uniformly over the $d$ states ($d=5$ or $d=10$) and $Z_2$ is generated from $Z_1$ using a transition matrix with all the transition probabilities within the interval $[0.05, 0.95]$. From the marginal distribution of $Z_1$ and the conditional distribution of $Z_2|Z_1=z_1$, we calculate the marginal distribution of $Z_2$ that is used to draw the refreshment sample.
The parameter of interest is
\begin{align*}
    \beta = \E [Z_1 Z_2],
\end{align*}
and the selection probability is $e^{k_1(z_1)+k_2(z_2)}$ with
\begin{align*}
    k_1(z_1) = -0.1(1+0.5 z_1), \quad k_2(z_2) = -0.2(1+0.5 z_2).
\end{align*}
The overall attrition rate is 40\% for $d=5$ and 53\% for $d=10$.
We assume that the support of the joint distribution of $Z_1, Z_2$ is known.
We use frequency estimators of the marginal probability mass functions of $Z_1$ and $Z_2$ and the selective joint probability mass function of $Z_1, Z_2$ as input to the raking procedure.


The continuous DGP models $(Z_1,Z_2)$ as a two-dimensional Gaussian vector with zero means, unit variances, and $\theta = Cov(Z_1,Z_2)=0.4.$
Again, the attrition process is nonignorable with selection probability $e^{k_1(z_1)+k_2(z_2)}$ and
\begin{align*}
    k_1(z_1) = -0.1 |z_1|, \quad k_2(z_2) = -0.3 |z_2|.
\end{align*}
The overall attrition rate is about 30\%. The marginal distributions of $Z_1$ and $Z_2$ and the balanced panel distribution of $Z_1,Z_2$ are specified as normal with means and variances estimated by MLE. These estimated distributions are then an input into the raking estimator.


We report bias, standard deviation (sd), and root-MSE (rmse) of three estimators.
The infeasible estimator $\hat\theta_{\text{infeas}}$ uses the true values $k_1,k_2$ to weight the balanced panel sample. The naive estimator $\hat\theta_{\text{naive}}$ disregards attrition and treats the balanced panel as the full panel, and therefore is inconsistent. Neither of these two estimators uses the refreshment sample.
Finally, $\hat\theta$ is our procedure that uses raking to estimate the balanced panel weights.

\Cref{tab-mc-discr,tab-mc-cont} contain the simulation results.
We denote by $N$ the sum of the sample sizes of the incomplete panel and the refreshment sample, with the latter being $0.4 N$ for all specifications.
Our estimator performs well in terms of both bias and RMSE.
It typically has slightly higher bias than the infeasible estimator, but a lower RMSE, partially driven by its use of the refreshment sample.

\begin{table}[h!]
\centering
\small

\begin{tabular}{c c c c c c c c c c c}
\toprule
& & \multicolumn{3}{c}{$N=1000$} & \multicolumn{3}{c}{$N=5000$} & \multicolumn{3}{c}{$N=10000$}\\	
\cmidrule(lr){3-5} \cmidrule(lr){6-8} \cmidrule(lr){9-11}	
&	& $\hat\theta_{\text{infeas}}$ &	$\hat\theta_{\text{naive}}$ &	$\hat\theta$ 
    & $\hat\theta_{\text{infeas}}$ &	$\hat\theta_{\text{naive}}$ &	$\hat\theta$
    & $\hat\theta_{\text{infeas}}$ &	$\hat\theta_{\text{naive}}$ &	$\hat\theta$ \\
\hline
\multirow{3}{*}{$d=5$}	
    & bias	& 0.002 & -0.46 & -0.007 & 0.009 & -0.45 & -0.0008 & -0.0006 & -0.46 & -0.003 \\ 
	& sd	& 0.283 & 0.18 &  0.206 &  0.125 &  0.09 & 0.0920 &   0.0829 & 0.05 & 0.063 \\
	& rmse  & 0.283 & 0.50 &  0.207 &  0.125 &  0.46 & 0.0920 &   0.0829 & 0.46 & 0.064 \\
 \midrule
\multirow{3}{*}{$d=10$}	
    &	bias & -0.001 & -3.92 & -0.085 & 0.074 & -3.88 & -0.0095 &  -0.0007 & -3.91 & -0.020 \\
	&	sd   & 1.653 & 0.82  &   0.996 & 0.751 & 0.37 & 0.4169 & 0.5044 &  0.25 &  0.303\\
	&	rmse & 1.653 & 4.01  &   1.000 & 0.755 & 3.90 & 0.4171 & 0.5044 & 3.92 & 0.304\\
 \bottomrule
\end{tabular}
\caption{Simulation results for the discrete DGP. Number of simulations $S=1000$.}
\label{tab-mc-discr}
\end{table}

\begin{table}[h!]
\centering
\small

\begin{tabular}{c c c c c c c}
\toprule
& \multicolumn{3}{c}{$N=1000$} & \multicolumn{3}{c}{$N=5000$} \\
\cmidrule(lr){2-4} \cmidrule(lr){5-7} 
    & $\hat\theta_{\text{infeas}}$ &	$\hat\theta_{\text{naive}}$ &	$\hat\theta$ 
    & $\hat\theta_{\text{infeas}}$ &	$\hat\theta_{\text{naive}}$ &	$\hat\theta$ \\
\midrule
    bias & -0.012 & -0.109 & -0.018 & -0.009 & -0.106 & 0.002   \\ 
	sd	& 0.062 & 0.041 & 0.054     &  0.032 & 0.022 & 0.029 \\ 
	rmse  & 0.063 & 0.116 & 0.057   &  0.034 & 0.109 & 0.029  \\ 
 \bottomrule
\end{tabular}
\caption{Simulation results for the continuous DGP. Number of simulations $S=1000$.}
\label{tab-mc-cont}
\end{table}

\section{Empirical illustration}\label{sec:empirical}

\begin{table}[t]
\centering
\small

\begin{tabular}{ll}
\toprule
Population         & Frequency \\
\midrule
Balanced Panel (BP)    & $3303$    \\
Incomplete Panel (IP)  & $904$    \\
Refreshment Sample (RS) & $3180$     \\ 
\bottomrule
\end{tabular}
\caption{ Sample sizes for the Balanced Panel, the Incomplete Panel, and the Refreshment Sample}
\label{tab:missing_data_pattern}
\end{table}

The empirical infrastructure that spawned the analysis in this paper is the Understanding America Study (UAS).\footnote{See \href{https://uasdata.usc.edu/index.php}{https://uasdata.usc.edu/index.php}}
The UAS is a probability-based Internet panel of about 15000 respondents.
Respondents without prior internet access receive a tablet and a broadband internet connection. Participants complete online surveys once or twice monthly.
Twenty-four core modules fielded biennially (about 400 total minutes) cover physical and mental health, economics, cognition, decision-making, and social determinants. Short monthly health updates capture acute and dynamic events. The system also supports high-frequency and event-triggered assessments (up to 6x/day) for ecological or crisis-response studies.
In addition, the UAS collects genetic information and some digital biomarkers.
Like any longitudinal study, the UAS suffers from attrition (7-8 percent per year), see \citet{kapteyn2024understanding}.

To counter the effect of attrition, but also because the panel is still growing, refreshment samples are drawn regularly (often several times a year). The current paper is a first step in a program to optimally use the refreshment samples for statistical inference.

The illustration in this paper is taken from the first two biennial waves of the UAS (May 20, 2015 - June 1, 2017 and June 1, 2017 - June 18, 2019). For expository reasons, we consider a very simple model of cognition. Most cognition dimensions tend to decrease with age. In addition, the literature has identified a large number of risk factors that increase the chance of dementia and generally may hasten cognitive decline. Here, we consider diabetes and depression as risk factors for cognitive decline. Education is generally found to be protective (\citet{livingston2024dementia}). 

We use numeracy as a simple measure of cognition. Appendix \ref{app:variables} contains a description of the variables used in the empirical analysis. The numeracy scores are the result of answers to eight questions, and are calculated using a two-parameter logistic IRT (Item Response Theory) model. Scores are normalized to a mean of 50 and a standard deviation of 10. Higher scores indicate better performance at numeracy problems.

The first five columns of Table \ref{tab:primafacie} display the sample averages of the covariates over the different subsamples.\footnote{The panel and refreshment samples were obtained using stratified sampling. The analyses in this section account for this by using sampling weights.} In these columns, we can see that the Balanced Panel contains $75.8\%$ whites, while the refreshment sample contains $12.1\%$ fewer whites. The refreshment sample contains more depressed respondents and fewer respondents with a college degree, compared to the balanced panel. The dropouts in the incomplete panel suffer less from diabetes. Column (6) shows that dropouts have lower numeracy (as expected), and that numeracy is much lower in the balanced panel than in the refreshment sample. This holds even after correcting for covariates in column (7).
Although these differences are not conclusive, they indicate that attrition may lead to invalid inference when attrition is ignored by using only the balanced panel.

\begin{table}[]
\centering
\small
{
\def\sym#1{\ifmmode^{#1}\else\(^{#1}\)\fi}
\begin{tabular}{l*{7}{c}}
\toprule
            &\multicolumn{1}{c}{(1)}&\multicolumn{1}{c}{(2)}&\multicolumn{1}{c}{(3)}&\multicolumn{1}{c}{(4)}&\multicolumn{1}{c}{(5)}&\multicolumn{1}{c}{(6)}&\multicolumn{1}{c}{(7)}\\
            &\multicolumn{1}{c}{White}&\multicolumn{1}{c}{Depressed}&\multicolumn{1}{c}{College}&\multicolumn{1}{c}{Diabetes}&\multicolumn{1}{c}{Age}&\multicolumn{1}{c}{Numeracy}&\multicolumn{1}{c}{Numeracy}\\
\hline
IP       &      -0.009         &       0.034         &      -0.016         &      -0.034\sym{*}  &      -0.924         &      -0.842\sym{*}  &      -0.676         \\
            &      (0.02)         &      (0.02)         &      (0.02)         &      (0.01)         &      (1.06)         &      (0.40)         &      (0.37)         \\
RS       &      -0.121\sym{***}&       0.252\sym{***}&      -0.105\sym{***}&      -0.034\sym{***}&       0.352         &       1.239\sym{***}&       0.608\sym{*}  \\
            &      (0.01)         &      (0.01)         &      (0.01)         &      (0.01)         &      (0.56)         &      (0.24)         &      (0.31)         \\
Diabetes    &                     &                     &                     &                     &                     &                     &      -1.398\sym{***}\\
            &                     &                     &                     &                     &                     &                     &      (0.32)         \\
College     &                     &                     &                     &                     &                     &                     &       5.700\sym{***}\\
            &                     &                     &                     &                     &                     &                     &      (0.23)         \\
Depressed   &                     &                     &                     &                     &                     &                     &      -2.138\sym{***}\\
            &                     &                     &                     &                     &                     &                     &      (0.25)         \\
White       &                     &                     &                     &                     &                     &                     &       4.047\sym{***}\\
            &                     &                     &                     &                     &                     &                     &      (0.25)         \\
Constant  &       0.758\sym{***}&       0.260\sym{***}&       0.534\sym{***}&       0.148\sym{***}&      53.995\sym{***}&      49.599\sym{***}&      44.252\sym{***}\\
            &      (0.01)         &      (0.01)         &      (0.01)         &      (0.01)         &      (0.36)         &      (0.14)         &      (0.28)         \\
\bottomrule
\multicolumn{8}{l}{\footnotesize Standard errors in parentheses}\\
\multicolumn{8}{l}{\footnotesize \sym{*} \(p<0.05\), \sym{**} \(p<0.01\), \sym{***} \(p<0.001\)}\\
\end{tabular}
}
\caption{Differences in means between the Balanced Panel (BP, reference group), the Incomplete Panel (IP), and the Refreshment Sample (RS).}
\label{tab:primafacie}
\end{table}

We estimate a random-effects linear regression with random effect $u_i$ and idiosyncratic error $e_{it}$, using the standard random-effects error structure with variances $\sigma^2_u$ and $\sigma^2_e$. We allow the coefficient vector $\beta$ to be different in Period one and Period two. 

Table \ref{tab:primafacie} showed that the distribution of the observed variables in the refreshment sample is substantially different from the distribution of these variables in the balanced panel. The additively nonignorable attrition model finds the population distribution that is consistent with the first-period marginal, obtainable from the balanced panel and the incomplete panel, and the second-period marginal, obtainable from the refreshment sample, using the raking weights.  
The attrition correction can be expected to affect the conditional mean function of numeracy given the covariates differentially in the two periods.

Table \ref{tab:estimates} shows estimates and standard errors of this model when attrition is ignored (Missing Completely At Random, MCAR) as well as when corrected for possibly nonignorable attrition using the AN model. The estimates are obtained using a weighted GMM procedure using the raking weights. The standard errors are obtained using Theorem \ref{thm:asy-var}.

\begin{table}[]
\centering
\small{
\def\sym#1{\ifmmode^{#1}\else\(^{#1}\)\fi}
\begin{tabular}{@{}lcclcc@{}}
\toprule
                 & \multicolumn{2}{c}{(1)}                   &  & \multicolumn{2}{c}{(2)}            \\ \midrule
                 & \multicolumn{2}{c}{MCAR (Balanced Panel)} &  & \multicolumn{2}{c}{AN Attrition}   \\ \cmidrule(lr){2-3} \cmidrule(l){5-6} 
                 & Numeracy, wave 1           & Numeracy, wave 2       &  & Numeracy, wave 1        & Numeracy, wave 2       \\ \midrule
Age              & -0.01502            & -0.00877            &  & -0.0270          & -0.00914        \\
                 & (0.00829)           & (0.00839)           &  & (0.0365)         & (0.0360)        \\
Diabetes         & -0.74718            & -1.440\sym{***}     &  & -1.016           & -1.990          \\
                 & (0.522)             & (0.521)             &  & (1.432)          & (1.527)         \\
College          & 5.381\sym{***}      & 5.084\sym{***}      &  & 5.699\sym{***}   & 5.480\sym{***}  \\
                 & (0.362)             & (0.371)             &  & (1.527)          & (1.594)         \\
Depressed        & -2.199\sym{***}     & -2.139\sym{***}     &  & -1.606           & -0.484          \\
                 & (0.403)             & (0.416)             &  & (1.521)          & (1.534)         \\
White            & 4.118\sym{***}      & 3.973\sym{***}      &  & 4.335\sym{***}   & 3.574\sym{***}  \\
                 & (0.417)             & (0.427)             &  & (1.526)          & (1.613)         \\
Constant         & 45.10\sym{***}      & 45.04\sym{***}      &  & 46.41\sym{***}   & 46.82\sym{***}  \\
                 & (0.592)             & (0.615)             &  & (2.417)          & (2.514)         \\ \midrule
$\sigma^{2}_{e}$ & \multicolumn{2}{c}{27.11\sym{***}}        &  & \multicolumn{2}{c}{27.10\sym{***}} \\
                 & \multicolumn{2}{c}{(1.092)}               &  & \multicolumn{2}{c}{(3.347)}        \\
$\sigma^{2}_{u}$ & \multicolumn{2}{c}{42.22\sym{***}}        &  & \multicolumn{2}{c}{43.98\sym{***}} \\
                 & \multicolumn{2}{c}{(1.544)}               &  & \multicolumn{2}{c}{(5.27)}         \\ \midrule
\multicolumn{6}{l}{\footnotesize Standard errors in parentheses}                                     \\
\multicolumn{6}{l}{\footnotesize \sym{*} \(p<0.05\), \sym{**} \(p<0.01\), \sym{***} \(p<0.001\)} \\ \bottomrule
\end{tabular}
}
\caption{ \small{Estimation results of the model that ignores attrition (Missing Completely At Random, MCAR) and of the Additively Nonignorable (AN) attrition model.}}
\label{tab:estimates}
\end{table}

Note that the standard errors for the AN model are higher because it estimates the population model while taking attrition into account. In the AN model, only attending college and being white are significant covariates. Moreover, the regression coefficients of depressed and white in the two periods in the MCAR model are closer to each other compared to the AN model. The Wald test for the hypothesis that all regression coefficients are the same under MCAR gives an insignificant $\chi^2$ statistic of $3.73$. Under AN, this statistic equals $19.45$, leading to rejection of stability at conventional significance levels. This rejects a standard random effects analysis in this example. 

Empirically, the most striking result in Table \ref{tab:estimates} is that under AN, both depressed and diabetes become insignificant, while they have a highly significant effect on numeracy under MCAR.
As Table \ref{tab:primafacie} makes clear, respondents in the balanced panel are less depressed, suffer more from diabetes, and are more likely to have a college degree, relative to respondents in the incomplete panel and refreshment samples. Unlike the MCAR estimates, the AN estimates are aligned with the marginal distributions of all variables in both time periods.

\section{Conclusion}\label{sec:conclusion}

Estimation and inference in panels with attrition and refreshment have long lacked a tractable implementation. 
We fill this gap by showing that (a continuous version of) raking can be used to estimate the target distribution.
We derive a convergence rate of our two-step estimation strategy and provide a parametric approximation to the asymptotic variance.
Promising directions for future work include establishing bootstrap validity, deriving the semiparametric efficiency bound and a corresponding variance formula, extending the framework to multi-wave panels, and characterizing the identified set under weaker assumptions on the attrition mechanism.

\section*{Acknowledgements}

We are grateful to Tim Armstrong, Tim Christensen, Vasily Goncharenko, Lidia Kosenkova, Sergey Lototsky, Anna Mikusheva, Francesca Molinari, Elizaveta Rebrova, Azeem Shaikh, Andrei Zeleneev, and seminar participants at USC, UC Irvine, University of Virginia, and Penn State for valuable comments.
All errors and omissions are our own.

\bibliographystyle{ecta}
\bibliography{references}

\newpage  
\part*{Appendix}
\appendix

\section {Proof of Theorem 1}\label{app:thm1}

  Theorem 3.5 in \citet{ruschendorf1995convergence} states that the measures corresponding to densities $\hat f^{(t)}$ converge to the measure corresponding to $\hat f$ in total variation. The latter implies $\|\hat f^{(t)}-\hat f\|_1 \to 0$ because the total variation distance can be written as $\operatorname{TV}(\hat f^{(t)},\hat f)=\frac{1}{2}\|\hat f^{(t)}-\hat f\|_1.$

\section{Proof of Lemma 1}\label{app:lem1}

    We follow Chapter 5 of \citet{boyd2004convex}.
    The program \eqref{eq:KL-representation} has the Lagrangian representation
    \begin{align}
        \min_{\tilde f} \max_{\lambda_1,\lambda_2} \mathcal{L}(\tilde f,\lambda_1,\lambda_2) \label{eq:primal}
    \end{align}
    with
    \begin{align*}
        \mathcal{L}(\tilde f,\lambda_1,\lambda_2) &= \int \tilde f(z)\log \frac{\tilde f(z)}{f^w(z)} \, dz + \int \lambda_1(z_1) \left( \int \tilde f(z_1,z_2) \,dz_2 - f_1(z_1) \right) \, dz_1  \\
        &+ \int \lambda_2(z_2) \left( \int \tilde f(z_1,z_2) \,dz_1 - f_2(z_2) \right) \, dz_2 \\
        &= \int \tilde f(z)\left[\log \frac{\tilde f(z)}{f^w(z)}+\lambda_1(z_1)+\lambda_2(z_2) \right] \,dz_1 dz_2 \\
        &-\int \lambda_1(z_1) f_1(z_1)\,dz_1 - \int \lambda_2(z_2)f_2(z_2)\,dz_2.
    \end{align*}
    The functional derivative of $\mathcal{L}$ with respect to $\tilde f(z_1,z_2)$ is
    \begin{align*}
        \frac{\partial\mathcal{L}}{\partial \tilde f(z_1,z_2)} = \log \frac{\tilde f(z)}{f^w(z)} + 1+\lambda_1(z_1)+\lambda_2(z_2).
    \end{align*}
    Setting this derivative to zero yields
    \begin{align}
        f(z_1,z_2) = f^w(z_1,z_2) e^{-1-\lambda_1(z_1)-\lambda_2(z_2)}. \label{eq:FOC-primal}
    \end{align}
    Notice that this condition is sufficient for the global minimum since the objective functional $\tilde f \mapsto \operatorname{KL}(\tilde f,f^w)$ is strictly convex. Hence, it is satisfied by the population density $f$ as reflected above.
    Substituting back in the Lagrangian yields the dual Lagrangian function \citep[Section 5.1.2]{boyd2004convex}
    \begin{align*}
        \min_{\tilde f}\mathcal{L}(\tilde f,\lambda_1,\lambda_2) = &-\int f^w(z_1,z_2) e^{-1-\lambda_1(z_1)-\lambda_2(z_2)} \, dz_1 dz_2 \\
        &- \int \lambda_1(z_1) f_1(z_1)\,dz_1 - \int \lambda_2(z_2)f_2(z_2)\,dz_2.
    \end{align*}

%
    
    The dual problem is then
    \begin{align}
        \max_{\lambda_1,\lambda_2} \min_{\tilde f} \mathcal{L}(\tilde f,\lambda_1,\lambda_2). \label{eq:KL-dual}
    \end{align}
    Since the primal problem is convex, the solutions for the primal problem \eqref{eq:primal} and the dual problem \eqref{eq:KL-dual} coincide \citep[Section 5.5.3]{boyd2004convex}.
    Denoting $k_1(z_1)=-1-\lambda_1(z_1)$ and $k_2(z_2)=-\lambda_2(z_2)$, we see that the dual problem is equivalent to \eqref{eq:k0-via-dual}, while the first-order condition \eqref{eq:FOC-primal} yields \eqref{eq:f-via-k0}.

    An analogous derivation can be performed for the sample version of the problem.

\section{Proof of Theorem 2}\label{app:thm2}




We first show that 
\begin{align}
    \| \hat k - k_0 \|_{\infty} = O_p(1/\sqrt n). \label{eq:k-hat-consistent}
\end{align}
Since $k_0, \hat k$ minimize a population and sample criterion function, respectively, we can apply Theorem 3.2.5 in \citet{vdv}. By the definition of $\M_n$ and $\M$,
\begin{align*}
(\M_n(k)-\M(k))-(\M_n(k_0)-\M(k_0))&=\iint \left (e^{k(z)}- e^{k_0(z)} \right )  \left ( \hat f^w(z)-f^w(z)\right) dz \\
&+\int (k_1(z_1)-k_{10}(z_1))\left( \hat f_1(z_1)-f_1(z_1)\right )dz_1 \\
&+ \int (k_2(z_2)-k_{20}(z_2))\left( \hat f_2(z_2)-f_2(z_2)\right )dz_2.
\end{align*}
Hence
\begin{align*}
\sup_{\|k-k_0\|_{\infty}< \delta}\left | (\M_n(k)-\M(k))-(\M_n(k_0)-\M(k_0))\right | &\le e^{\sup_{k \in \mathcal{K}} \|k\|_{\infty}}\delta \iint \left |\hat f^w(z)-f^w(z)\right | dz \\
&+ \delta \int \left |\hat f_1(z_1)-f_1(z_1)\right |dz_1  \\
&+ \delta \int \left |\hat f_2(z_2)-f_2(z_2)\right |dz_2.
\end{align*}
Let us show that
\[
\sqrt{n }\E\iint \left|\hat f^w (z)-f^w(z)\right|\,dz = O(1),
\]
and similarly for $\hat f_1$ and $\hat f_2$.
Indeed, by Assumption \ref{as:densities-alt} and the mean value theorem,
\begin{align}
\hat f^w(z)-f^w(z)= f^w(z; \hat \gamma_w) -f^w(z; \gamma_w)  =\frac{\partial f^w}{\partial \gamma_w'}(z; \overline \gamma_w)(\hat \gamma_w - \gamma_w).    \label{eq:fw-hat-taylor}
\end{align}
Assumptions \ref{as:densities-alt} and \ref{as:mle} yield
\begin{align*}
\sqrt{n }\E\iint \left|\hat f^w (z)-f^w(z)\right|\,dz &\le \sqrt{n} \E\left[ \iint \left\|\frac{\partial f^w}{\partial \gamma_w'}(z; \overline \gamma_w)\right\| \|\hat\gamma_w-\gamma_w\| dz  \right] \\
&\le C \sqrt{n} \E \|\hat\gamma_w-\gamma_w\| = O(1).    
\end{align*}
Therefore,
\begin{align*}
\E\left[ \sup_{\|k-k_0\|_{\infty}< \delta}\sqrt n \left| (\M_n(k)-\M(k))-(\M_n(k_0)-\M(k_0))\right| \right]  \le O(1) \delta.
\end{align*}
Choosing $\phi_n (\delta)=\delta$ in Theorem 3.2.5 of \citet{vdv} establishes \eqref{eq:k-hat-consistent}.

Next, we show that
\begin{align}
      \left\|\hat f-f \right\|_1=O_p\left (1/\sqrt n\right ). \label{eq:f-hat-consistency}
\end{align}
Write
\begin{align}
    \hat f-f = e^{\hat k} \hat f^w - e^{k_0} f^w = e^{\hat k}(\hat f^w-f^w) + f^w(e^{\hat k}-e^{k_0}). \label{eq:f-hat-f}
\end{align}
The first term on the right can be bounded as 
\[
\left\| e^{\hat k}(\hat f^w-f^w) \right\|_1 \le e^{\sup_{k\in\mathcal{K} }\|k\|_\infty}  \|\hat f^w-f^w\|_1 = O_p(1/\sqrt{n}),
\]
where the rate follows by the Taylor expansion \eqref{eq:fw-hat-taylor} and Assumptions \ref{as:densities-alt} and \ref{as:mle}.
The second term can be bounded as 
\[
\left\|f^w(e^{\hat k}-e^k) \right \|_1 \le \|f^w\|_1 \|e^{\hat k}-e^k \|_{\infty} \le e^{\sup_{k \in \mathcal{K}} \|k\|_{\infty}}\|\hat k - k_0\|_\infty= O_p(1/\sqrt n),
\]
where the second inequality uses the bound $|e^x-e^y|\le e^{\max(x,y)} |x-y|$ due to the mean value theorem.
Combining the two bounds yields \eqref{eq:f-hat-consistency}.

Finally, let us show that $\hat\theta-\theta_0 =O_p\left (1/ \sqrt n \right )$. The moment estimator of $\theta$ is
the solution to
\[
\int \varphi(z,\hat \theta) \hat f(z) d z=0 
\]
Because
\[
\left [\int \varphi(z,\theta) f(z) d z \right ]' \left [\int \varphi(z,\theta) f(z) d z \right ]
\]
is uniquely minimized at $\theta=\theta_0$, and by
\begin{align*}
    \sup_{\theta \in \Theta} \left| \int \varphi(z,\theta)(\hat f(z)-f(z))\,dz \right| \le \sup_{\theta \in \Theta} \sup_{z \in \mathcal{Z}}|\varphi(z,\theta)| \cdot \|\hat f-f\|_1 = o_p(1)
\end{align*}
The function $\theta \mapsto \int \varphi(z, \theta) \hat f(z) d z$ converges uniformly in probability, we conclude by Theorem 2.1 in \citet{newey1994large} that $\hat \theta \stackrel{p}{\rightarrow}\theta_0$.

By the mean value theorem
\[
\int \left [ \varphi(z,\theta_0)+  \frac{\partial \varphi}{\partial \theta'}(z,\bar \theta)(\hat \theta -\theta_0) \right ](\hat f(z) -f(z))dz+\int \left [ \varphi(z,\theta_0)+  \frac{\partial \varphi}{\partial \theta'}(z,\bar \theta )(\hat \theta -\theta_0)\right ]f(z)dz=0.
\]
By the consistency of $\hat\theta$, the $\sqrt{n}$-consistency of $\hat f$ in equation \eqref{eq:f-hat-consistency}, and Assumption \ref{as:identification}(v), the first integral above is $O_p(1/\sqrt{n})$.
Therefore,
\[
 \left [\int   \frac{\partial \varphi}{\partial \theta'}(z,\theta_0) f(z)dz \right ](\hat \theta -\theta_0)= - \int \left [ \varphi(z,\theta_0)\right ](\hat f(z) -f(z))dz+O_p(1/\sqrt{n}).
\]
By a similar argument, the integral on the right is $O_p(1/\sqrt{n})$, and hence
\[
\hat \theta -\theta_0=-\left [\int   \frac{\partial \varphi}{\partial \theta'}(z,\theta_0) f(z)dz \right ]^{-1}\int  \varphi(z,\theta_0)(\hat f(z) -f(z))dz+O_p \left ( 1/\sqrt n \right )= O_p \left ( 1/\sqrt n \right ).
\]

\section{Definition of Empirical Variables}\label{app:variables}

\textbf{Numeracy}
A respondent’s numeracy score is based on answers to the following 8 questions [with the answers accepted as correct given in square brackets]. The question codes correspond to the UAS documentation: \url{https://uasdata.usc.edu/page/Cognitive+Comprehensive+File}.

\begin{description}
    \item[lip001] Imagine that we roll a fair, six-sided die 1,000 times. Out of 1,000 rolls, how many times do you think the die would come up as an even number? [490-510]
    \item[lip002] In the BIG BUCKS LOTTERY, the chances of winning a \$10.00 prize are 1\%. What is your best guess about how many people would win a \$10.00 prize if 1,000 people each buy a single ticket from BIG BUCKS? [10]
    \item[lip003] In the ACME PUBLISHING SWEEPSTAKES, the chance of winning a car is 1 in 1,000.
What percent of tickets of ACME PUBLISHING SWEEPSTAKES win a car? [0.1]
    \item[lip008] If the chance of getting a disease is 10
    \item[lip009] If the chance of getting a disease is 20 out of 100, this would be the same as having how much of a percent chance of getting the disease? [20]
    \item[lip012] Suppose you have a close friend who has a lump in her breast and must have a
mammogram. Of 100 women like her, 10 of them actually have a malignant tumor, and 90
of them do not. Of the 10 women who actually have a tumor, the mammogram indicates
correctly that 9 of them have a tumor, and indicates incorrectly that 1 of them does not
have a tumor. Of the 90 women who do not have a tumor, the mammogram indicates
correctly that 80 of them do not have a tumor, and indicates incorrectly that 10 of them do
have a tumor. The table below summarizes all of this information. \\
\small{
\begin{tabular}{|c|c|c|c|}
\hline
  & Tested Positive & Tested Negative & Totals \\ \hline
Actually has a tumor & 9 & 1 & 10 \\ \hline
Does not have a tumor & 10 & 80 & 90 \\ \hline
Totals & 19 & 81 & 100 \\ \hline
\end{tabular}    
} \\
Imagine that your friend tests positive (as if she had a tumor). What is the
likelihood that she actually has a tumor? [9 out of 19]
    \item[lip015] A bat and a ball cost \$1.10 in total. The bat costs \$1.00 more than the ball. How much does the ball cost? [5 or .05]
    \item[lip0017] In a lake, there is a patch of lily pads. Every day, the patch doubles in size. If it takes 48 days for the patch to cover the entire lake, how long would it take for the patch to cover half of the lake? [47]
\end{description}

The numeracy scores were derived using a two-parameter logistic IRT model. Scores are normalized to a mean of 50 and a standard deviation of 10. Higher scores indicate better performance at numeracy problems. \\

\textbf{Diabetes}
This variable equals one if the respondent reports having diabetes and zero otherwise. \\

\textbf{College}
This variable equals one if the respondent reports having at least a bachelor’s degree and zero otherwise. \\

\textbf{Depressed}
This variable equals one if the respondent confirms at least three symptoms on the CESD 8 questionnaire (Items 4 and 6 are reverse-scored, i.e., if a respondent denies such an item, that adds to the score). The eight CESD statements that respondents could confirm or deny are:
\begin{enumerate}
    \item Much of the time during the past week, you felt depressed.
    \item You felt that everything you did was an effort.
    \item Your sleep was restless.
    \item You were happy.
    \item You felt lonely.
    \item You enjoyed life.
    \item You felt sad.
    \item You could not get going.
\end{enumerate}
 
\end{document}